\newtheorem{thm}{Theorem}
\theoremstyle{remark}
\newtheorem{defn}{Definition}
\newtheorem{remark}{Remark}
\newcommand{\R}{\mathbb{R}}
\title{\LARGE \bf Learning Stable and Passive Neural Differential Equations
}
\author{Jing Cheng, Ruigang Wang and Ian R. Manchester
\thanks{*This work was supported in part by the Australian Research Council, and the NSW Defence Innovation Network.}
\thanks{The authors are with the Australian Centre for Robotics (ACFR), and the School of Aerospace, Mechanical and Mechatronic Engineering, The University of Sydney, Sydney, NSW 2006, Australia
        {\tt\small ruigang.wang@sydney.edu.au}}%
}
\begin{document}

\maketitle
\thispagestyle{empty}
\pagestyle{empty}

\begin{abstract}

In this paper, we introduce a novel class of neural differential equation, which are intrinsically Lyapunov stable, exponentially stable or passive. We take a recently proposed Polyak Lojasiewicz network (PLNet) as an Lyapunov function and then parameterize the vector field as the descent directions of the Lyapunov function. The resulting models have a same structure as the general Hamiltonian dynamics, where the Hamiltonian is lower- and upper-bounded by quadratic functions. Moreover, it is also positive definite w.r.t. either a known or learnable equilibrium. We illustrate the effectiveness of the proposed model on a damped double pendulum system.



\end{abstract}

\section{Introduction}

Neural networks have demonstrated great power in machine learning and control, particularly on learning the dynamics and forecasting the behavior of dynamical systems \cite{hornik1989multilayer, chen2018neural}. When approximating the dynamic behaviour, preserving some essential properties especially stability and passivity, is favored by learning and control community. Enforcing stability can benefit the learnt models especially in terms of generalization. 

For nonlinear systems, enforcing stability during learning has been studied with Gaussian mixture models and polynomial models in \cite{tobenkin2010convex, khansari2011learning, tobenkin2017convex} and even in the case of linear systems it is non-trivial \cite{umenberger2018maximum}. For nonlinear systems, there are various of stability notions with different implications.  In the context of learning, a strong stability notion called \emph{contraction} \cite{lohmiller1998contraction} (any pair of trajectories converge to each other) has recently received much attention due to its equilibrium-independent stability nature. For discrete-time setup,  \cite{revay2020contracting,revay2023recurrent,fan2024learning} have developed contracting, incrementally passive and dissipative neural dynamics. A continuous-time counterpart can be found in \cite{martinelli2023unconstrained}. A benefit of \cite{revay2023recurrent, martinelli2023unconstrained} is their direct (i.e. unconstrainted \& surjective) parameterization of stable models, making training easy. However a limitation is that they enforce contraction with respect to a state-independent quadratic metric, limiting flexibility. 

For learning dynamical systems with weak stability properties (e.g., Lyapunov stability w.r.t. a particular equilibrium), it is often desired to apply models that preserve the similar stability property. A key component for stable neural differential equations is the neural Lyapunov function. From \cite{wilson1967structure} and the resolution of the Poincare Conjecture by Perelman \cite{anderson2004geometrization}, all Lyapunov functions have level sets homeomorphic to the unit ball. This suggests searching for candidate Lyapunov functions of the form $V(x)=g(x)^\top g(x)$ where $g$ is a neural network, e.g. multi-layer perception (MLP). But for general MLP, $V(x)$ may not be positive definite or radially unbounded. In \cite{richards2018lyapunov}, a special parameterization for $g$ (i.e., each layer has a larger dimension than the previous layer) is proposed in \cite{richards2018lyapunov} such that $V(x)$ is positive definite. The drawback is that it may restrict the model flexibility. The work \cite{kolter2019learning} proposed several neural Lyapunov functions built on the input convex neural network (ICNN) \cite{amos2017input}. To be specific, given an ICNN $\phi(x)$ where $\phi$ is a scalar-output network which is convex w.r.t. the input $x$, the Lyapunov function is constructed as $V(x)=\phi(x)$ or $V(x)=\phi(x)+\epsilon |x|^2$. Although it can guarantee a lower quadratic bound, it is still unknown how to impose an upper quadratic bound. A more recent survey on neural Lyapunov functions can be found in \cite{dawson2023safe}.

Given an Lyapunov function $V(x)$, one needs to construct a vector field $f(x)$ such that the Lyapunov inequality $\nabla^\top V(x)f(x)<0 $ holds for nonzero $x$. An intuitive approach is the gradient flow, i.e., $f(x) = -\alpha(x) \nabla V(x)$ with $\alpha(x)>0$. However, it is often restrictive as there are infinitely many of descent directions. In \cite{kolter2019learning}, a projection method is applied to resolve the case where Lyapunov inequality is infeasible. 

In this work, we take a recently developed bi-Lipschitz network $g$ from \cite{wang2024monotone} to construct a Lyapunov function $H(x)=0.5|g(x)|^2$. Since $g$ is bi-Lipschitz, then it is also invertible, implying that the level set of $H$ is homeomorphic to a unit ball. Unlike the ICNN-based Lyapunov function, the level sets of $H$ can be non-convex. Thanks to the bi-Lipschitz property of $g$, both quadratic lower and upper bounds are guaranteed automatically. We then construct the vector field as $f(x)=[J(x)-R(x)]\nabla H(x)$ where $J$ is skew symmetric and $R$ is positive definite. The above $f$ is a smooth field of the descent vectors of $H$. By parameterizing $J$ and $R$ using neural networks, one can learn flexible stable dynamics. Moreover, this structure can also be realized as the Hamiltonian form that widely exists in physical systems \cite{van2014port,duindam2009modeling}. 

{\bf Contribution.} In this paper, we propose a new class of stable neural differential equations, called stable Hamiltonian neural dynamics (SHND). 
\begin{itemize}
    \item It has certified global Lyapunov stability and exponential stability by the parameterized Lyapunov function, with respect to an equilibrium point. The equilibrium point can either be imposed as a prior knowledge or learned by the data. 
    \item We further extend it to passive port-Hamiltonian system, which has potential applications in learning passivity-based controllers.
    \item Empirical results show that it outperforms existing methods in terms of fitting error, simulation error and robustness.  
\end{itemize}

{\bf Paper organization.} In section~\ref{sec:problem setup} we give a problem definition and review some preliminaries results. We present our main result in section~\ref{sec:hnn}. Section~\ref{sec:exp} illustrates our approach on a damped double pendulum, followed by conclusions in Section~\ref{sec:conclusion}. 

{\bf Notation.} Given a $x\in\R^n$, we denote a ball around $x$ with radium $\epsilon$ by  $\mathcal{B}_\epsilon(x):=\{y\in \R^n \mid |y-x|\leq \epsilon \}$ where $|\cdot|$ is the Euclidean norm. 
We denote $A\succeq B$ if and only if the matrix $(A-B)$ is positive semi-definite. 
We define the gradient transpose operator as $\nabla_x:=(\partial /\partial x)^\top$. The subscript will be omitted when it is clear from the context. 

\section{Problem Formulation and Preliminaries}
\label{sec:problem setup}
\subsection{Problem setup}

Given a dataset $\mathcal{D}=\{(x_i,v_i)\mid x_i,v_i\in \R^n\}$, we consider the problem of learning a neural vector field $v=f(x)$ from $\mathcal{D}$ such that the following differential equation
\begin{equation}\label{eq:system}
    \dot{x}=f(x)
\end{equation} 
is \emph{guaranteed} to be stable in the following senses.  
\begin{defn}
    Let $x^\star$ be an equilibrium of \eqref{eq:system}, i.e., $f(x^\star)=0$. 
    \begin{enumerate}
        \item System \eqref{eq:system} is said to be \emph{stable} at $x^\star$ if, for each $\epsilon>0$, there is $\delta=\delta(\epsilon)>0$ such that 
    \begin{equation}\label{eq:stable}
        x(0)\in \mathcal{B}_{\delta}(x^\star)\Rightarrow x(t)\in \mathcal{B}_{\epsilon}(x^\star), \quad \forall t\geq 0.
    \end{equation}
    \item It is said to be \emph{exponentially stable} if there exist $\delta, \lambda, \kappa > 0$ such that 
    \begin{equation}\label{eq:exp-stable}
        x(0)\in \mathcal{B}_{\delta}(x^\star)\Rightarrow |x(t)-x^\star|\leq \kappa e^{-\lambda t} |x(0)-x^\star|
    \end{equation}
    for all $t\geq 0$. It is said to be \emph{globally} exponentially stable if \eqref{eq:exp-stable} holds for all $\delta>0$.
    \end{enumerate}
\end{defn}

We also consider learning $v=f(x,u)$ and $y=h(x)$ from a dataset $\mathcal{D}=\{(x_i,v_i, u_i, y_i)\mid x_i,v_i\in \R^{n}, u_i,y_i\in \R^{m}\}$ such that the following controlled system in the input-state-output form
\begin{equation}\label{eq:open-system}
    \dot{x}=f(x,u),\quad y=h(x)
\end{equation}
is internally stable and passive. 
\begin{defn}[\cite{khalil2002nonlinear}]
    System \eqref{eq:open-system} is said to be \emph{passive} if there exists a continuously differentiable positive semidefinite function $S(x)$ (storage function) such that 
    \begin{equation}
        \frac{\partial S}{\partial x} f(x,u) \leq u^\top y, \quad \forall (x,u)\in \R^n\times \R^m.
    \end{equation}
\end{defn}

\subsection{Preliminaries}\label{sec:plnet}
We review two types of neural networks which will be applied in our stable model construction.
\begin{defn}
    A neural network $g:\R^n\rightarrow\R^n$ is said to be \emph{bi-Lipschitz} if there exist $\nu \geq \mu >0$ such that 
    \begin{equation}
        \mu|x-x'|\leq |g(x)-g(x')|\leq \nu |x-x'|,\quad \forall x, x'\in \R^n.
    \end{equation}
\end{defn}
We call such $g$ a $(\mu,\nu)$-Lipschitz network. It is worth to point out that $g$ is also invertible and $g^{-1}$ is a bi-Lipschitz network with bound of $ (1/\nu, 1/\mu) $. In literature, there are some existing approaches \cite{chen2019residual,behrmann2019invertible,lu2021implicit,ahn2022invertible} to learn bi-Lipschitz neural networks with certified bounds. 

The following definition from \cite{wang2024monotone} was motivated by Polyak-\L{}ojasiewicz condition \cite{polyak1963gradient,lojasiewicz1963topological} 
\begin{defn}
    A scalar-output neural network $H: \R^n\rightarrow \R$ is called a \emph{Polyak-\L{}ojasiewicz network} (PLNet) if there exists a  $m>0$ such that
    \begin{equation}\label{eq:pl-cond}
        \left|\nabla_x H\right|^2\geq 2m \bigl(H(x)-H^\star\bigr),\quad \forall x\in \R^n,
    \end{equation}
    where $H^\star:=\min_{y\in \R^n} H(y)$.
\end{defn}

One main result of \cite{wang2024monotone} is that given any $(\mu,\nu)$-Lipschitz network $g$, we can obtain a PLNet as follows:
\begin{equation}\label{eq:Hx}
    H(x)=0.5|g(x)|^2+c,\quad c\in \R,
\end{equation}
i.e., \eqref{eq:pl-cond} holds with $m=\mu^2$. Since our model construction will only use the gradient $\nabla H$, we assume $c=0$ throughout this paper. Some nice properties of PLNet include: 
\begin{enumerate}
    \item $H$ could be non-convex but always has one minimum $x^\star=g^{-1}(0)$, i.e., 
    \begin{equation}\label{eq:H-grad}
        \nabla H(x^\star)=0,\quad \nabla H(x)\neq 0,\; \forall x\in \R^n\backslash\{x^\star\}.
    \end{equation}
    \item $H$ is upper- and lower-bounded by  quadratic functions: 
    \begin{equation}\label{eq:quadratic-bound}
        0.5\mu^2 |x-x^\star|^2 \leq H(x) \leq 0.5 \nu^2 |x-x^\star|^2.
    \end{equation}
    \item If $g$ is a smooth mapping, then the level set $\mathbb{L}_\alpha=\{x\in \R^n\mid H(x)\leq \alpha\}$ is homeomorphic to a unit ball in $\R^n$, for any $\alpha>0$. 
\end{enumerate}
These features make the PLNet a promising candidate for learning Lyapunov functions, e.g. using the methods of \cite{pmlr-v155-boffi21a, dawson2023safe}. In this paper we focus on learning stable dynamics by parameterising flows that are descent directions for such a Lyapunov function, and consider further extensions to passive dynamics.

\section{Stable Hamiltonian Neural Dynamics}
\label{sec:hnn}
In this section, we first present a class of stable neural dynamics in the form of Hamiltonian system, and then extend it to passive dynamical models.

\subsection{Stable Hamiltonian models}
Our proposed model has the form of general Hamiltonian dynamics as follows:  
\begin{equation}\label{eq:HNN}
    \dot x= [J(x)-R(x)]\nabla H(x)
\end{equation}
where $x(t)\in \R^{n}$ is the state, $H(x)\in \R$ is the Hamiltonian, and $J(x)=-J^\top(x)$, $R(x)=R^\top(x)$ are the interconnection and damping matrices, respectively, which will be parameterized by neural networks. We now state our main result.
\begin{thm}\label{thm:1}
    Suppose that $H$ is a PLNet \eqref{eq:Hx} with $g$ as a $(\mu,\nu)$-Lipschitz network. Let $x^\star$  be the global minimum of $H$. We have the following results:
    \begin{enumerate}
        \item[a)] If $R(x)\succeq 0,\, \forall x\in \R^n$, then \eqref{eq:HNN} is stable at $x^\star$, i.e., \eqref{eq:stable} holds with $\delta(\epsilon)=\epsilon \mu/\nu$.
        \item[b)] If there exist $\epsilon > 0$ such that $R(x)\succeq \epsilon I,\,\forall x\in \R^n $, then \eqref{eq:HNN} is globally exponentially stable at $x^\star$, i.e., \eqref{eq:exp-stable} holds with $\kappa=\nu/\mu$ and $\lambda=\epsilon$.
    \end{enumerate}
\end{thm}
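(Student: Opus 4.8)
The plan is to use $H$ itself as the Lyapunov function, which is natural given the whole construction is built around it. The essential observation is to compute the Lie derivative of $H$ along the trajectories of \eqref{eq:HNN}. Since $\dot x = [J(x)-R(x)]\nabla H(x)$, we have $\dot H = \nabla^\top H(x)\,\dot x = \nabla^\top H(x)[J(x)-R(x)]\nabla H(x)$. The skew-symmetric part contributes nothing because $\nabla^\top H\, J\, \nabla H = 0$ for any skew-symmetric $J$, leaving $\dot H = -\nabla^\top H(x)\,R(x)\,\nabla H(x)$. This immediately gives $\dot H \leq 0$ under the assumption $R(x)\succeq 0$, so $H$ is non-increasing along trajectories.

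For part a), I would combine this monotonicity with the quadratic sandwich bound \eqref{eq:quadratic-bound}. Because $H$ is non-increasing, $H(x(t))\leq H(x(0))$ for all $t\geq 0$. Feeding this through the lower and upper bounds of \eqref{eq:quadratic-bound} gives
\begin{equation*}
    0.5\mu^2|x(t)-x^\star|^2 \leq H(x(t)) \leq H(x(0)) \leq 0.5\nu^2 |x(0)-x^\star|^2,
\end{equation*}
which rearranges to $|x(t)-x^\star|\leq (\nu/\mu)|x(0)-x^\star|$. To verify the claimed $\delta(\epsilon)=\epsilon\mu/\nu$, I would take $x(0)\in\mathcal{B}_\delta(x^\star)$ with $\delta=\epsilon\mu/\nu$, so that $|x(t)-x^\star|\leq (\nu/\mu)\delta = \epsilon$, establishing \eqref{eq:stable}.

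For part b), the stronger hypothesis $R(x)\succeq\epsilon I$ lets me bound the dissipation rate from below: $\dot H = -\nabla^\top H\, R\, \nabla H \leq -\epsilon|\nabla H|^2$. Here is where the PL property \eqref{eq:pl-cond} does the crucial work — it converts the gradient-norm bound into a bound on $H$ itself. Since $H^\star = 0.5|g(x^\star)|^2 = 0$ (recall $x^\star=g^{-1}(0)$ and $c=0$), the PL condition with $m=\mu^2$ reads $|\nabla H|^2\geq 2\mu^2 H(x)$. Combining, $\dot H \leq -2\epsilon\mu^2 H$, so by Gr\"onwall's inequality $H(x(t))\leq e^{-2\epsilon\mu^2 t}H(x(0))$. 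Applying the quadratic bounds \eqref{eq:quadratic-bound} once more and taking square roots yields $|x(t)-x^\star|\leq (\nu/\mu)e^{-\epsilon\mu^2 t}|x(0)-x^\star|$.

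This last computation is where I expect a subtlety that must be reconciled with the theorem statement. My derivation produces a decay rate $\lambda=\epsilon\mu^2$, whereas the theorem claims $\lambda=\epsilon$; the two agree only when $\mu=1$, or under a different normalization of the PL constant. The main task is therefore to confirm the precise constant in the PL condition as applied here (whether $m=\mu^2$ enters the exponent as written, or whether a reparameterization absorbs it) and to state the decay rate consistently. Aside from this bookkeeping, the argument is routine: the only genuine structural ingredients are the vanishing of the skew-symmetric term and the PL inequality, both already furnished by the preliminaries.
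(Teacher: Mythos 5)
Your proof is correct and follows essentially the same route as the paper: the skew-symmetric term vanishes, $\dot H = -\nabla^\top H\, R\, \nabla H$, the quadratic sandwich \eqref{eq:quadratic-bound} gives part a) with $\delta(\epsilon)=\epsilon\mu/\nu$, and the PL inequality plus Gr\"onwall gives part b). The subtlety you flagged is real, and you resolved it the right way: the paper's own proof also concludes with rate $\lambda=\epsilon\mu^2$ and overshoot $\nu/\mu$, so the $\lambda=\epsilon$ in the theorem statement is an inconsistency in the paper (it holds only if $\mu=1$), not a gap in your derivation.
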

\begin{proof}
a) As shown in Section~\ref{sec:plnet}, $H(x)$ is a positive semidefinite function and its derivative satisfies
\begin{equation}\label{eq:dotH}
    \begin{split}
        \dot H &= \nabla^\top H(x) [J(x)-R(x)]\nabla H(x) \\
        &= -\nabla^\top H(x) R(x)\nabla H(x) \leq 0,\quad \forall x \in \R^n
    \end{split}
\end{equation}  
implying $H(x(t))\leq H(x(0))$. From \eqref{eq:quadratic-bound}, we have 
\begin{equation*}
    |x(t)-x^\star|\leq \frac{\sqrt{2H(x(t))}}{\mu}\leq \frac{\sqrt{2H(x(0))}}{\mu}\leq \frac{\nu}{\mu}|x(0)-x^\star|,
\end{equation*}
implying $|x(0)-x^\star|\leq \mu\epsilon/\nu\Rightarrow |x(t)-x^\star|\leq \epsilon$ for all $t\geq 0$.

b) Since $R(x)\succeq \epsilon I$, then \eqref{eq:dotH} implies
\begin{equation}
    \dot H \leq - \epsilon |\nabla H|^2 \leq -2\epsilon\mu^2 (H(x)-H(x^\star))=-2\epsilon\mu^2 H(x),
\end{equation}
Furthermore, we have 
\begin{equation}
    \begin{split}
        |x(t)-x^\star|&\leq \frac{\sqrt{2H(x(t))}}{\nu}\leq \frac{\sqrt{2e^{-2\lambda t}H(x(0))}}{\nu} \\
        &\leq \frac{\nu}{\mu}e^{-\lambda t}|x(0)-x^\star|,
    \end{split}
\end{equation}
where $\lambda=\epsilon \mu^2$, i.e., \eqref{eq:HNN} is globally exponentially stable with rate $\epsilon\mu^2$ and overshoot $\nu/\mu$.
\end{proof}
\begin{remark}
    The hyper-parameters $\mu,\nu$ and $\epsilon$ offer us the flexibility to control model's minimum convergence rate and overshoot.
\end{remark}
\begin{remark}
    The Hamiltonian $H(x)$ is a Lyapunov function $V(x)$ of system \eqref{eq:HNN}. 
\end{remark}

If the equilibrium $x^\star$ is unknown, we simply take the PLNet \eqref{eq:Hx} as the learnable Hamiltonian. Otherwise, we can impose the prior knowledge of $x^\star$ into our model by choosing the Hamiltonian $H(x)=0.5|g(x)-g(x^\star)|^2$, which also is a PLNet as $\tilde{g}(x):=g(x)-g(x^\star)$ is a bi-Lipschitz model with the same bound as $g$. In this work we will use the bi-Lipschitz network recently proposed in \cite{wang2024monotone}. Specifically, $y=g_{\theta}(x)$ has the form of 
\begin{equation}\label{eq:network}
    \begin{split}
        z_k&=\sigma(W_k z_{k-1}+ U_{k}x+b_k),\; z_0=0\\
        y &= \mu x + \sum_{k=1}^L Y_k z_k +b_y
    \end{split}
\end{equation}
where $z_k$ is the $k$th layer hidden unit, and $x,y$ are the input and output of this network. The learnable parameters of $g$ are $\theta=\{U_k, Y_k, W_k, b_k, b_y\}$. One main result of \cite{wang2024monotone} is a model parameterization $\mathcal{M}:\phi\rightarrow\theta$ with $\phi\in \R^N$ as a free variable such that for any $\phi\in \R^N$, the model $g_{\mathcal{M}(\phi)}$ is bi-Lipschitz for some prescribed bound of $(\mu, \nu)$. Then, the off-shelf optimization algorithm (e.g., stochastic gradient decent) can be directly applied to train bi-Lipschitz models. 

We now parameterize the matrix functions $J(x)$ and $R(x)$. Let $S(x) $ and $L(x)$ be two learnable multi-layer perceptions (MLPs), which are mappings from $\R^n$ to $\R^{n\times n}$. We then construct
\begin{equation}\label{eq:JR}
    \begin{split}
        J(x)=S(x)-S^\top(x),\quad 
        R(x)=L^\top(x)L(x)+\epsilon I
    \end{split}
\end{equation}
where $\epsilon\geq 0$ is a user-specified constant. 

\subsection{Passive port-Hamiltonian models}
We further extend \eqref{eq:HNN} to port-Hamiltonian system of the form  
\begin{equation}\label{eq:passive-model}
    \begin{split}
        \dot x &= (J(x)- R(x))\nabla H(x) + B(x)u \\
        y &= B(x)^\top \nabla H(x)
    \end{split}
\end{equation}
where $u(t),y(t)\in \R^m$ are the input-output variables, respectively, and $B(x)\in \R^{n\times m}$ is a learnable matrix function. The following theorem shows that \eqref{eq:passive-model} is passive with respect to the storage function $H(x)$. 
\begin{thm}
    Suppose that $H$ is a PLNet and the matrix function $R$ satisfies $R(x)\succeq 0$ for all $x$. Then, \eqref{eq:passive-model} is passive.
\end{thm}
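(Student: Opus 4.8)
The plan is to use the Hamiltonian itself as the storage function, i.e.\ to take $S(x)=H(x)$. By the PLNet construction \eqref{eq:Hx} together with the quadratic lower bound \eqref{eq:quadratic-bound}, we have $H(x)\geq 0.5\mu^2|x-x^\star|^2\geq 0$, so $H$ is positive semidefinite; it is continuously differentiable whenever $g$ is smooth. Thus $H$ is a legitimate candidate storage function in the sense of the passivity definition, and the task reduces to verifying the dissipation inequality $\frac{\partial H}{\partial x}f(x,u)\leq u^\top y$.

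The core of the argument is to differentiate $H$ along the closed-loop trajectories of \eqref{eq:passive-model} and match the result against $u^\top y$. First I would compute
\begin{equation*}
    \dot H = \nabla^\top H(x)\,\dot x = \nabla^\top H(x)\bigl[(J(x)-R(x))\nabla H(x)+B(x)u\bigr],
\end{equation*}
and then split this into an autonomous part and an input part.

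Next I would handle the autonomous part exactly as in the proof of Theorem~\ref{thm:1}: the skew-symmetry $J(x)=-J^\top(x)$ gives $\nabla^\top H\,J(x)\nabla H=0$, while $R(x)\succeq 0$ gives $-\nabla^\top H\,R(x)\nabla H\leq 0$. For the input part I would substitute the output equation $y=B(x)^\top\nabla H(x)$ to rewrite $\nabla^\top H(x)B(x)u=\bigl(B(x)^\top\nabla H(x)\bigr)^\top u=y^\top u=u^\top y$. Combining these yields
\begin{equation*}
    \frac{\partial H}{\partial x}f(x,u)=-\nabla^\top H\,R(x)\nabla H+u^\top y\leq u^\top y,\quad\forall (x,u),
\end{equation*}
which is precisely the passivity inequality.

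I do not expect a genuine obstacle here: the result is essentially the standard port-Hamiltonian energy balance, and the only two facts needed beyond routine calculation are the skew-symmetry of $J$ (which annihilates the interconnection term) and the chosen output map (which is exactly what converts the input term into $u^\top y$). The one point worth stating carefully is that positive semidefiniteness of the storage function---rather than positive definiteness---is all the definition requires, so the PLNet lower bound in \eqref{eq:quadratic-bound} is more than sufficient.
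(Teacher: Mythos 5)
Your proof is correct and follows essentially the same route as the paper: you take $H$ as the storage function, use skew-symmetry of $J$ and $R(x)\succeq 0$ to dispose of the autonomous part, and substitute the output map $y=B(x)^\top\nabla H(x)$ so the input term becomes exactly $u^\top y$. The paper's proof is a more compact version of the same computation (it evaluates $\dot H - u^\top y$ in a single step), and your added remark that only positive semidefiniteness of the storage function is required is a fair clarification, not a deviation.
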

\begin{proof}
    We take $H(x)$ as the storage function and then check the dissipation inequality as follows:
    \begin{equation}
        \begin{split}
            \dot H-u^\top y&=-\nabla^\top H R \nabla H+\nabla^\top H B u-u^\top B^\top \nabla H \\
            &=-\nabla^\top H R \nabla H \leq 0.
        \end{split}
    \end{equation}
\end{proof}
\begin{remark}
    An potential application of the above model class is learning stabilizing controllers for unknown but passive systems \cite{secchi2007control}. 
\end{remark}

\section{Experiments}
\label{sec:exp}
We illustrate the proposed approach on learning the dynamics of a damped double pendulum system. The pendulum state is $x=[\,\theta_1 \; \theta_2 \; \dot \theta_1 \; \dot \theta_2\,]^\top $ where $\theta_1, \theta_2$ are the pendulum joint angles and $\dot{\theta}_1,\dot{\theta}_2$ are the angular velocities. The dynamics of the damped pendulum is $\dot x = f_p(x)$
where the vector filed $f_p:\R^4\rightarrow \R^4$ can be found in \cite{kolter2019learning}. Due to damping on each joint, the above system has a stable equilibrium $x^\star=0$, i.e., $f_p(0)=0$. Code is available at \url{https://github.com/ruigangwang7/StableNODE}.

The main objective is to learn a vector field $f:\R^4\rightarrow \R^4$ based on a set of samples $(x_i, v_i)$ where $v_i=f_p(x_i)$. To be specific, we randomly generate 2000 samples as the train data set $\mathcal{D}_{\mathrm{train}}:=\{(x_i, v_i)\}_{1\leq i\leq 2000}$, where each $x_i$ is uniformly drawn from the region where $|\theta_i|\leq \pi$ $\mathrm{rad}$ and $|\dot{\theta}_i|\leq \pi$ $ \mathrm{rad}/\mathrm{s}$ with $i=1,2$. The test data set $\mathcal{D}_{\mathrm{test}}$ consists of 500 samples with the same distribution. Given a learnable model $f$, the train/test loss are defined as the mean square $\ell_2$ loss:
\begin{equation}\label{eq:l2loss}
    \ell(f)=\frac{1}{n_{\mathcal{D}}}\sum_{i=1}^{n_{\mathcal{D}}} \bigl|v_i-f(x_i)\bigr|^2.
\end{equation}
with $n_{\mathcal{D}}$ as the number of elements in $\mathcal{D}$, where $\mathcal{D}$ is either $\mathcal{D}_{\mathrm{train}}$ or $\mathcal{D}_{\mathrm{test}}$. 

Beside loss minimization, we also want to learn a model $f$ such that the associated dynamics is globally exponentially stable at $x^\star$. The main purpose of such requirement is that when simulating the model $f$, a small fitting loss \eqref{eq:l2loss} can lead to a small error between the trajectories generated by $\dot x= f_p(x)$ and $\dot x=f(x)$.

\subsection{Model architecture and training details}
\begin{figure}[!bt]
    \includegraphics[width=\linewidth]{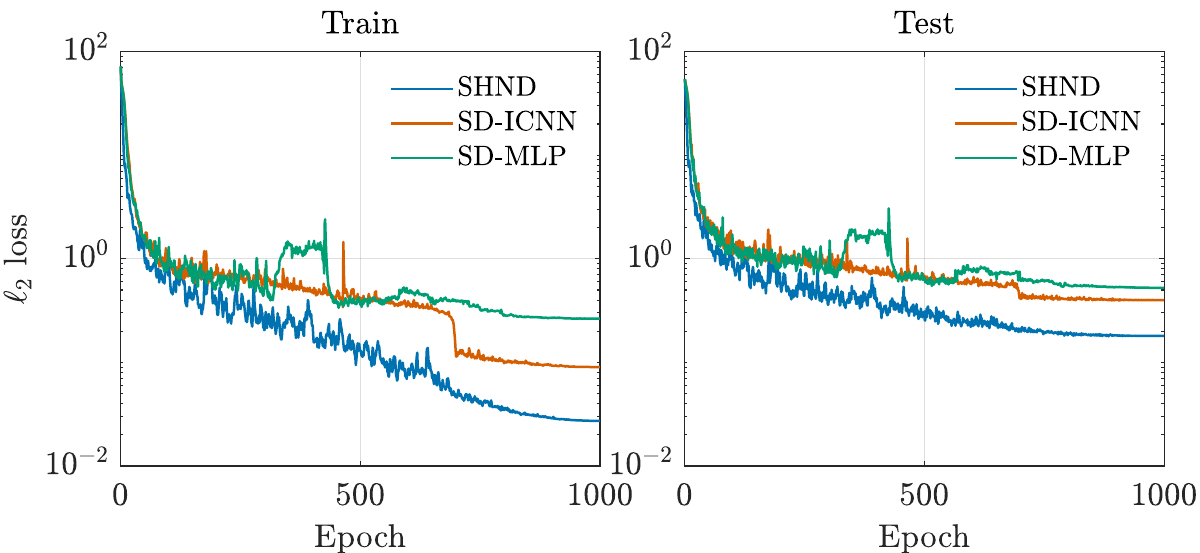}
    \caption{Training and test loss \eqref{eq:l2loss} versus epochs. The proposed SHND model achieves better performance than SD-ICNN and SD-MLP.}\label{fig:loss}
\end{figure}

We will compare our model with the Lyapunov projection based stable models from \cite{kolter2019learning}, which learn a pair of vector field $\hat f:\R^n\rightarrow \R^n$ and an Lyapunov function $V:\R^n\rightarrow \R$. The dynamical model is defined by
\begin{equation}\label{eq:f_proj}
\begin{aligned}
\dot x=f(x)
= \hat{f}(x) - \nabla V(x)\frac{\texttt{ReLU}(\nabla V(x)^T \hat{f}(x) + \alpha V(x))}{| \nabla V(x) |^2}
\end{aligned}
\end{equation}
where $\texttt{ReLU}(x)=\max(0,x)$. For the vector field $\hat f$, we choose a 4-100-100-4 multi-layer perception (MLP), i.e., a network with input and output dimension of 4 and two hidden layer with each has 100 neurons. For the Lyapunov function $V$, we consider the following two cases extracted from \cite{kolter2019learning}:  
\begin{itemize}
    \item {\bf SD-MLP}: We take a 4-64-64 MLP $g$ and then construct $V(x)=|g(x)|^2$. And the resulting model in \eqref{eq:f_proj} is referred as SD-MLP. Note that such choice of Lyapunov function generally does not satisfy \eqref{eq:H-grad} and \eqref{eq:quadratic-bound}. Thus, \eqref{eq:f_proj} only ensures that $V(x(t))$ is bounded. The simulation trajectory may contain oscillations.
    \item {\bf SD-ICNN}: We take a 4-64-64-1 ICNN \cite{amos2017input} as the Lyapunov function $V(x)$. Although $V(x)$ is convex, it may have multiple global minimum. By adding a quadratic term $\epsilon |x|^2$ to $V$, one can ensure that the resulting Lyapunov function has a quadratic lower bound. But it is still not clear how to enforce the quadratic upper bound. Moreover, when the prior knowledge about the equilibrium $x^\star$ is available, it is unclear how to enforce the constraint $V(x^\star)=0$ to the ICNN. Therefore, the resulting dynamics \eqref{eq:f_proj} may converge to an undesired equilibrium.
\end{itemize}

In summary, since the existing approaches for constructing neural Lyapunov function do not guarantee 
\[
c_1|x|^2\leq V(x) \leq c_2|x|^2
\]
with $0<c_1\leq c_2$, then the dynamics \eqref{eq:f_proj} does not provide exponentially guarantees for the equilibrium $x^\star$.

For the proposed SHND model, we use the bi-Lipschitz network \eqref{eq:network} from \cite{wang2024monotone} with architecture 4-32-32-4 and Lipschitz bound $(\mu,\nu)=(0.1,2)$ to construct the Hamiltonian $H(x)$. For the matrix functions $J(x)$ and $R(x)$, we first take a 4-90-90-32 MLP, then partition the output into two $4\times 4$ matrices $L$ and $S$, and finally compute $J,R$ via \eqref{eq:JR} with $\epsilon=0.01$. As shown in Thm~\ref{thm:1}, the SHND model can provide global exponential stability guarantee for the equilibrium $x^\star$. We will illustrate this point in the next section. For a fair comparison, our SHND model has a similar amount of learnable model parameters (around 15K) as the SD-MLP and SD-ICNN. 

We use Adam \cite{kingma2014adam} to train each model with mean square loss for 1000 epochs. The batch size is 200 and the learning rate starts from $0.01$ and decays to $0$ according to the {\tt cosine} rate scheduler. 

\subsection{Results and discussions}

We first show the training and test loss over the epochs in Fig.~\ref{fig:loss}. Our proposed SHND model outperforms SD-MLP and SD-ICNN. To test the model stability, we simulate the learned dynamics $\dot x=f(x)$ as well as the pendulum dynamics $\dot x = f_p(x)$ and then compare the error between those trajectories. To be specific, we take a batch of 100 samples $|\theta_i|\leq \pi/2$ and $\dot\theta_i=0$ as initial conditions, then simulate the dynamics using Runge-Kutta method with step size of $0.01\mathrm{s}$. We plot the batch averaged and maximum simulation errors in Fig.~\ref{fig:mse}, as well as the trajectories w.r.t. two initial conditions in Fig.~\ref{fig:simu}. 

\begin{figure}[!bt]
    \centering
    \includegraphics[width=0.95\linewidth]{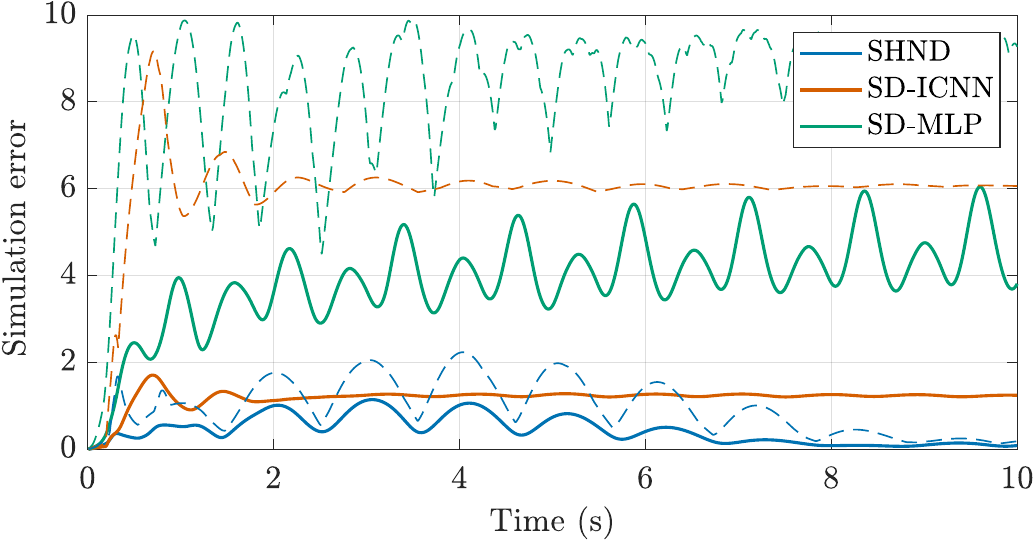}
    \caption{Batch averaged (solid) and maximum (dashed) simulation error between the trajectories generated by the learned model and the original pendulum dynamics w.r.t. a batch of initial conditions. Our model has much smaller simulation error and converges to 0 as time increases. This is mainly due to the stability guarantee w.r.t. the known equilibrium $x^\star$. The SD-ICCN has a steady error due to the fact that it converges to another equilibrium for some initial conditions. The SD-MLP produces oscillating trajectories. }\label{fig:mse}
\end{figure}

We can observe that our model has much smaller simulation errors compared to SD-ICNN and SD-MLP. In particular, the simulation error of SHND converges to 0 as the simulation time increases, demonstrating that our model can provide stability guarantee for the known equilibrium $x^\star$. For SD-ICNN model, as shown in Fig.~\ref{fig:simu}, with slight changes in the initial condition, it converges to a different equilibrium. The SD-MLP produces oscillating behaviors as its Lyapunov function is not positive definite. Also, its simulation error is much larger than the other two approaches.

\begin{figure}[!bt]
    \includegraphics[width=\linewidth]{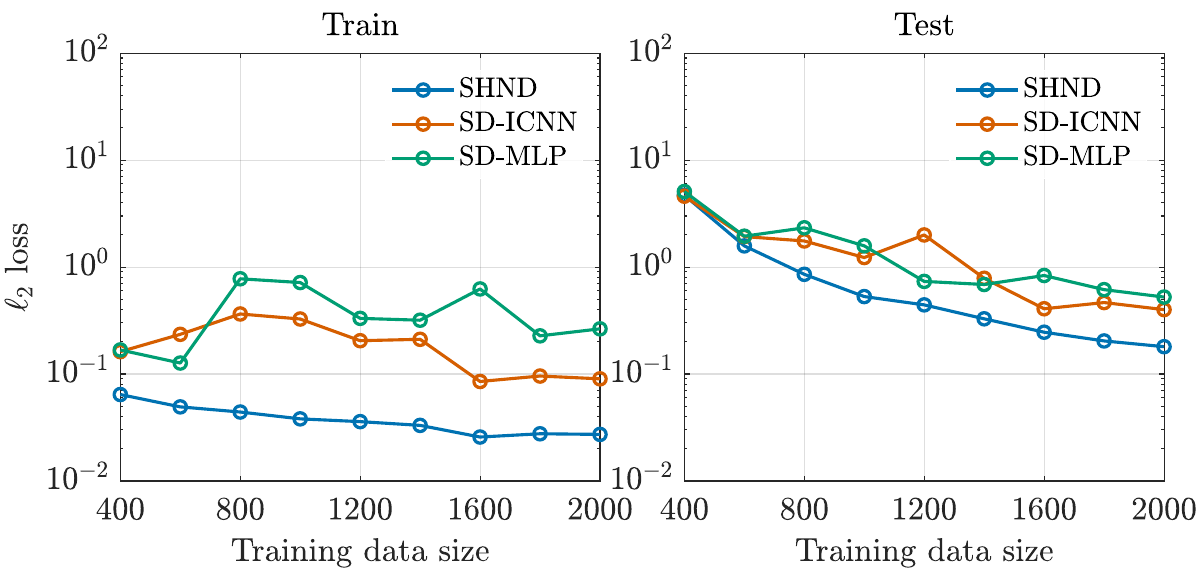}
    \caption{Training and test loss \eqref{eq:l2loss} versus training data size. We use the same test data set with 500 samples. For our model, the performance improves as the size increases while the other two have some variations. }\label{fig:loss_v_tsize}
\end{figure}

\begin{figure}[!bt]
    \centering
    \includegraphics[width=\linewidth]{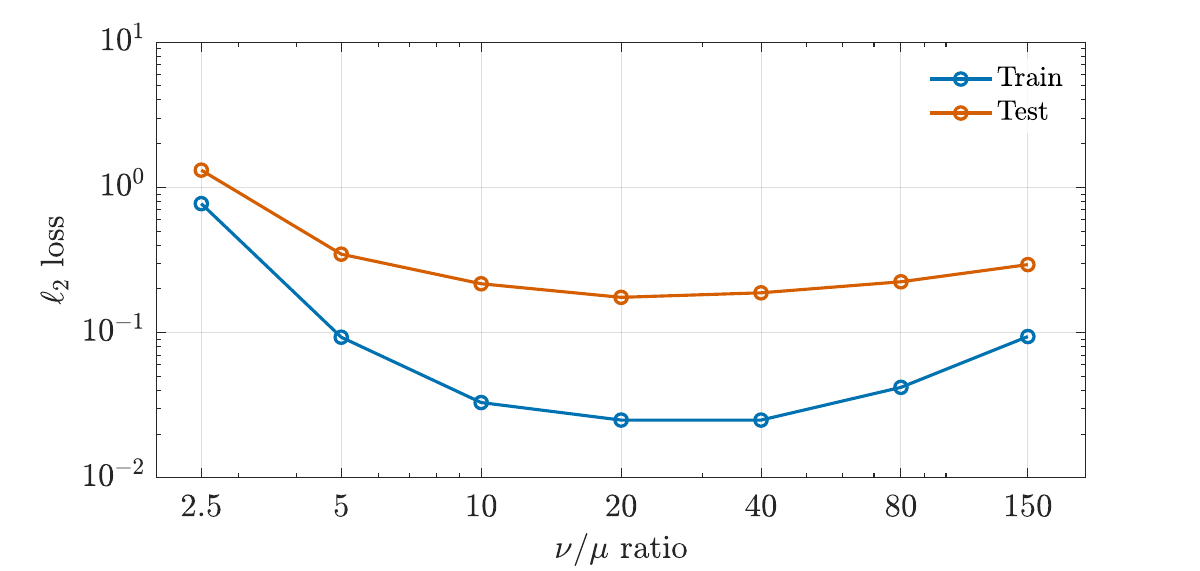}
    \caption{Effect of the ratio $\nu/\mu$ on the train/test loss \eqref{eq:l2loss}. We set $\mu = 0.1$ and trained the SHND models with different $\nu$. The ratio $\nu/\mu$ can be serve as a regularizer for the proposed SHND model. } \label{fig:mu_n_nu}
\end{figure}

\begin{figure*}[!bt]
    \centering
    \includegraphics[width=\linewidth]{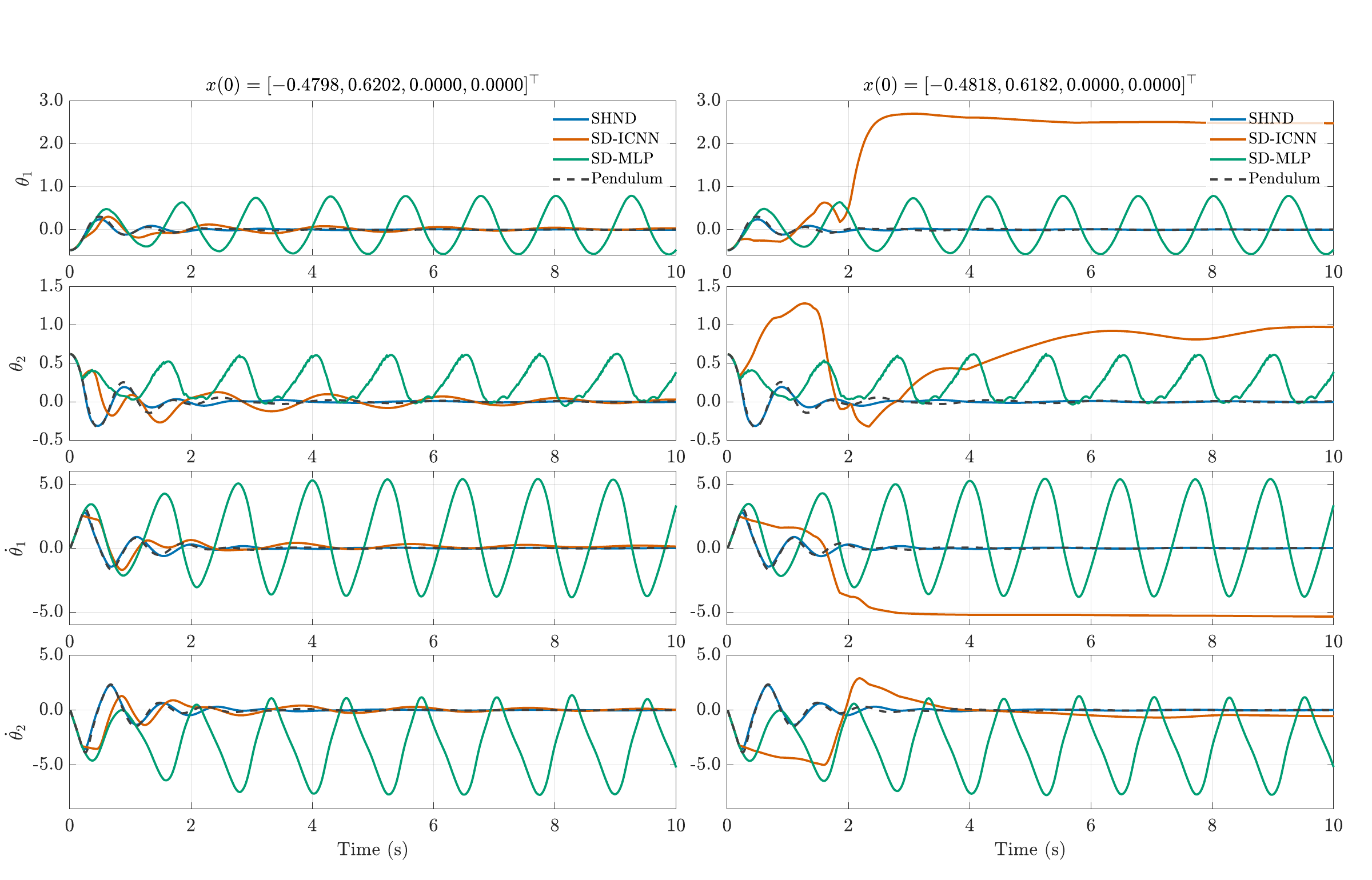}
    \caption{Simulation trajectories of different models with two initial states. The SD-MLP model generates oscillating behaviors. Both SD-ICNN and the proposed SHND converge. But the SD-ICNN converges to a non-zero equilibrium point despite a small change in the initial condition. }\label{fig:simu}
\end{figure*}

We also perform additional model training with respect to different data sizes. From Fig.~\ref{fig:loss_v_tsize} we can see that our model has a constant performance improvement as the data size increases while the other two approaches have some variation. 

Different hyper-parameter $\nu/\mu$ in our model is also tested. The result in Fig.~\ref{fig:mu_n_nu} shows that $\nu/\mu$ has a regularization effect on our model.


\section{Conclusion}
\label{sec:conclusion}
In this paper, we have proposed a new class of neural differential equations called stable Hamiltonian neural dynamics (SHND), which has built-in stability \textit{w.r.t.} an equilibrium. Empirical results on a vector field fitting problem demonstrate its effectiveness. For future directions, we will consider to learn stable models from time-series data and also explore applications in learning passivity-based controllers.

\bibliographystyle{ieeetr}
\bibliography{ref}

\begin{thebibliography}{10}

\bibitem{hornik1989multilayer}
K.~Hornik, M.~Stinchcombe, and H.~White, ``Multilayer feedforward networks are universal approximators,'' {\em Neural networks}, vol.~2, no.~5, pp.~359--366, 1989.

\bibitem{chen2018neural}
R.~T. Chen, Y.~Rubanova, J.~Bettencourt, and D.~K. Duvenaud, ``Neural ordinary differential equations,'' {\em Advances in neural information processing systems}, vol.~31, 2018.

\bibitem{tobenkin2010convex}
M.~M. Tobenkin, I.~R. Manchester, J.~Wang, A.~Megretski, and R.~Tedrake, ``Convex optimization in identification of stable non-linear state space models,'' in {\em 49th IEEE Conference on Decision and Control (CDC)}, pp.~7232--7237, IEEE, 2010.

\bibitem{khansari2011learning}
S.~M. Khansari-Zadeh and A.~Billard, ``Learning stable nonlinear dynamical systems with gaussian mixture models,'' {\em IEEE Transactions on Robotics}, vol.~27, no.~5, pp.~943--957, 2011.

\bibitem{tobenkin2017convex}
M.~M. Tobenkin, I.~R. Manchester, and A.~Megretski, ``Convex parameterizations and fidelity bounds for nonlinear identification and reduced-order modelling,'' {\em IEEE Transactions on Automatic Control}, vol.~62, no.~7, pp.~3679--3686, 2017.

\bibitem{umenberger2018maximum}
J.~Umenberger, J.~W{\aa}gberg, I.~R. Manchester, and T.~B. Sch{\"o}n, ``Maximum likelihood identification of stable linear dynamical systems,'' {\em Automatica}, vol.~96, pp.~280--292, 2018.

\bibitem{lohmiller1998contraction}
W.~Lohmiller and J.-J.~E. Slotine, ``On contraction analysis for non-linear systems,'' {\em Automatica}, vol.~34, no.~6, pp.~683--696, 1998.

\bibitem{revay2020contracting}
M.~Revay and I.~Manchester, ``Contracting implicit recurrent neural networks: Stable models with improved trainability,'' in {\em Learning for Dynamics and Control}, pp.~393--403, PMLR, 2020.

\bibitem{revay2023recurrent}
M.~Revay, R.~Wang, and I.~R. Manchester, ``Recurrent equilibrium networks: Flexible dynamic models with guaranteed stability and robustness,'' {\em IEEE Transactions on Automatic Control}, 2023.

\bibitem{fan2024learning}
F.~Fan, B.~Yi, D.~Rye, G.~Shi, and I.~R. Manchester, ``Learning stable koopman embeddings for identification and control,'' {\em arXiv preprint arXiv:2401.08153}, 2024.

\bibitem{martinelli2023unconstrained}
D.~Martinelli, C.~L. Galimberti, I.~R. Manchester, L.~Furieri, and G.~Ferrari-Trecate, ``Unconstrained parametrization of dissipative and contracting neural ordinary differential equations,'' in {\em 2023 62nd IEEE Conference on Decision and Control (CDC)}, pp.~3043--3048, IEEE, 2023.

\bibitem{wilson1967structure}
F.~Wilson~Jr, ``The structure of the level surfaces of a lyapunov function,'' 1967.

\bibitem{anderson2004geometrization}
M.~T. Anderson, ``Geometrization of 3-manifolds via the ricci flow,'' {\em Notices AMS}, vol.~51, pp.~184--193, 2004.

\bibitem{richards2018lyapunov}
S.~M. Richards, F.~Berkenkamp, and A.~Krause, ``The lyapunov neural network: Adaptive stability certification for safe learning of dynamical systems,'' in {\em Conference on Robot Learning}, pp.~466--476, PMLR, 2018.

\bibitem{kolter2019learning}
J.~Z. Kolter and G.~Manek, ``Learning stable deep dynamics models,'' {\em Advances in neural information processing systems}, vol.~32, 2019.

\bibitem{amos2017input}
B.~Amos, L.~Xu, and J.~Z. Kolter, ``Input convex neural networks,'' in {\em International Conference on Machine Learning (ICML)}, pp.~146--155, PMLR, 2017.

\bibitem{dawson2023safe}
C.~Dawson, S.~Gao, and C.~Fan, ``Safe control with learned certificates: A survey of neural lyapunov, barrier, and contraction methods for robotics and control,'' {\em IEEE Transactions on Robotics}, 2023.

\bibitem{wang2024monotone}
R.~Wang, K.~D. Dvijotham, and I.~Manchester, ``Monotone, bi-lipschitz, and polyak-{Ł}ojasiewicz networks,'' in {\em Proceedings of the 41st International Conference on Machine Learning}, pp.~50379--50399, 2024.

\bibitem{van2014port}
A.~Van Der~Schaft, D.~Jeltsema, {\em et~al.}, ``Port-hamiltonian systems theory: An introductory overview,'' {\em Foundations and Trends{\textregistered} in Systems and Control}, vol.~1, no.~2-3, pp.~173--378, 2014.

\bibitem{duindam2009modeling}
V.~Duindam, A.~Macchelli, S.~Stramigioli, and H.~Bruyninckx, {\em Modeling and control of complex physical systems: the port-Hamiltonian approach}.
\newblock Springer Science \& Business Media, 2009.

\bibitem{khalil2002nonlinear}
H.~K. Khalil, {\em Nonlinear systems}.
\newblock Prentice Hall, New York, NY, 2002.

\bibitem{chen2019residual}
R.~T. Chen, J.~Behrmann, D.~K. Duvenaud, and J.-H. Jacobsen, ``Residual flows for invertible generative modeling,'' {\em Advances in Neural Information Processing Systems}, vol.~32, 2019.

\bibitem{behrmann2019invertible}
J.~Behrmann, W.~Grathwohl, R.~T. Chen, D.~Duvenaud, and J.-H. Jacobsen, ``Invertible residual networks,'' in {\em International conference on machine learning (ICML)}, pp.~573--582, PMLR, 2019.

\bibitem{lu2021implicit}
C.~Lu, J.~Chen, C.~Li, Q.~Wang, and J.~Zhu, ``Implicit normalizing flows,'' in {\em International Conference on Learning Representations (ICLR)}, 2021.

\bibitem{ahn2022invertible}
B.~Ahn, C.~Kim, Y.~Hong, and H.~J. Kim, ``Invertible monotone operators for normalizing flows,'' {\em Advances in Neural Information Processing Systems}, vol.~35, pp.~16836--16848, 2022.

\bibitem{polyak1963gradient}
B.~Polyak, ``Gradient methods for minimizing functionals (in russian),'' {\em USSR Computational Mathematics and Mathematical Physics}, vol.~3, no.~4, pp.~643--653, 1963.

\bibitem{lojasiewicz1963topological}
S.~Lojasiewicz, ``A topological property of real analytic subsets,'' {\em Coll. du CNRS, Les {\'e}quations aux d{\'e}riv{\'e}es partielles}, vol.~117, no.~87-89, p.~2, 1963.

\bibitem{pmlr-v155-boffi21a}
N.~Boffi, S.~Tu, N.~Matni, J.-J. Slotine, and V.~Sindhwani, ``Learning stability certificates from data,'' in {\em Proceedings of the 2020 Conference on Robot Learning} (J.~Kober, F.~Ramos, and C.~Tomlin, eds.), vol.~155 of {\em Proceedings of Machine Learning Research}, pp.~1341--1350, PMLR, 16--18 Nov 2021.

\bibitem{secchi2007control}
C.~Secchi, S.~Stramigioli, and C.~Fantuzzi, {\em Control of interactive robotic interfaces: A port-Hamiltonian approach}, vol.~29.
\newblock Springer Science \& Business Media, 2007.

\bibitem{kingma2014adam}
D.~P. Kingma and J.~Ba, ``Adam: A method for stochastic optimization,'' in {\em International Conference on Learning Representations (ICLR)}, 2015.

\end{thebibliography}

\end{document}